\tikzstyle{env}=[copoint,regular polygon rotate=0,minimum width=0.2cm, fill=black]
\tikzstyle{every picture}=[baseline=-0.25em]
\tikzstyle{dotpic}=[scale=0.5]
\tikzstyle{diredges}=[every to/.style={diredge}]
\tikzstyle{dot graph}=[shorten <=-0.1mm,shorten >=-0.1mm,scale=0.6]
\tikzstyle{plot point}=[circle,fill=black,minimum width=2mm,inner sep=0]
\tikzstyle{braceedge}=[decorate,decoration={brace,amplitude=2mm,raise=-1mm}]
\tikzstyle{small braceedge}=[decorate,decoration={brace,amplitude=1mm,raise=-1mm}]
\tikzstyle{left hook arrow}=[left hook-latex]
\tikzstyle{right hook arrow}=[right hook-latex]
\tikzstyle{black dot}=[inner sep=0.7mm,minimum width=0pt,minimum height=0pt,fill=black,draw=black,shape=circle]
\tikzstyle{dot}=[black dot]
\tikzstyle{smalldot}=[inner sep=0.4mm,minimum width=0pt,minimum height=0pt,fill=black,draw=black,shape=circle]
\tikzstyle{white dot}=[dot,fill=white]
\tikzstyle{antipode}=[white dot,inner sep=0.3mm,font=\footnotesize]
\tikzstyle{smallwhitedot}=[smalldot,fill=white]
\tikzstyle{alt white dot}=[white dot,label={[xshift=3.07mm,yshift=-0.05mm,font=\footnotesize]left:$*$}]
\tikzstyle{gray dot}=[dot,fill=gray!40!white]
\tikzstyle{smallgraydot}=[smalldot,fill=gray!40!white]
\tikzstyle{box vertex}=[draw=black,rectangle]
\tikzstyle{small box}=[box vertex,fill=white]
\tikzstyle{whitebg}=[fill=white,inner sep=2pt]
\tikzstyle{graph state vertex}=[sg vertex,fill=black]
\tikzstyle{wide copoint}=[fill=white,draw=black,shape=isosceles triangle,shape border rotate=90,isosceles triangle stretches=true,inner sep=1pt,minimum width=1.5cm,minimum height=5mm]
\tikzstyle{wide point}=[fill=white,draw=black,shape=isosceles triangle,shape border rotate=-90,isosceles triangle stretches=true,inner sep=1pt,minimum width=1.5cm,minimum height=4mm]
\tikzstyle{very wide copoint}=[fill=white,draw=black,shape=isosceles triangle,shape border rotate=-90,isosceles triangle stretches=true,inner sep=1pt,minimum width=2.5cm,minimum height=4mm]
\tikzstyle{very wide empty copoint}=[draw=black,shape=isosceles triangle,shape border rotate=-90,isosceles triangle stretches=true,inner sep=1pt,minimum width=2.5cm,minimum height=4mm]
\tikzstyle{symm}=[ultra thick,shorten <=-1mm,shorten >=-1mm]
\tikzstyle{square box}=[rectangle,fill=white,draw=black,minimum height=5mm,minimum width=5mm,font=\small]
\tikzstyle{square gray box}=[rectangle,fill=gray!30,draw=black,minimum height=6mm,minimum width=6mm]
\tikzstyle{copoint}=[regular polygon,regular polygon sides=3,draw=black,scale=0.75,inner sep=-0.5pt,minimum width=7mm,fill=white]
\tikzstyle{point}=[regular polygon,regular polygon sides=3,draw=black,scale=0.75,inner sep=-0.5pt,minimum width=7mm,fill=white,regular polygon rotate=180]
\tikzstyle{gray point}=[point,fill=gray!40!white]
\tikzstyle{gray copoint}=[copoint,fill=gray!40!white]
\newcommand{\edgearrow}{{\arrow[black]{>}}}
\newcommand{\edgetick}{{\arrow[black,scale=0.7,very thick]{|}}}
\tikzstyle{diredge}=[->]
\tikzstyle{rdiredge}=[<-]
\tikzstyle{medium diredge}=[->]
\tikzstyle{short diredge}=[->]
\tikzstyle{halfedge}=[-)]
\tikzstyle{other halfedge}=[(-]
\tikzstyle{freeedge}=[(-)]
\tikzstyle{white edge}=[line width=5pt,white]
\tikzstyle{tick}=[postaction=decorate,decoration={markings, mark=at position 0.5 with \edgetick}]
\tikzstyle{small map edge}=[|-latex, gray!60!blue, shorten <=0.9mm, shorten >=0.5mm]
\tikzstyle{thick dashed edge}=[very thick,dashed,gray!40]
\tikzstyle{map edge}=[|-latex,very thick, gray!40, shorten <=1mm, shorten >=0.5mm]
\tikzstyle{tickedge}=[postaction=decorate,
\tikzstyle{dirtickedge}=[postaction=decorate,
\tikzstyle{dirdoubletickedge}=[postaction=decorate,
\newcommand{\boxshape}[3]{%
\pgfdeclareshape{#1}{
\inheritsavedanchors[from=rectangle] 
\inheritanchorborder[from=rectangle]
\inheritanchor[from=rectangle]{center}
\inheritanchor[from=rectangle]{north}
\inheritanchor[from=rectangle]{south}
\inheritanchor[from=rectangle]{west}
\inheritanchor[from=rectangle]{east}
\backgroundpath{
\southwest \pgf@xa=\pgf@x \pgf@ya=\pgf@y
\northeast \pgf@xb=\pgf@x \pgf@yb=\pgf@y

\@tempdima=#2
\@tempdimb=#3

\pgfpathmoveto{\pgfpoint{\pgf@xa - 5pt + \@tempdima}{\pgf@ya}}
\pgfpathlineto{\pgfpoint{\pgf@xa - 5pt - \@tempdima}{\pgf@yb}}
\pgfpathlineto{\pgfpoint{\pgf@xb + 5pt + \@tempdimb}{\pgf@yb}}
\pgfpathlineto{\pgfpoint{\pgf@xb + 5pt - \@tempdimb}{\pgf@ya}}
\pgfpathlineto{\pgfpoint{\pgf@xa - 5pt + \@tempdima}{\pgf@ya}}
\pgfpathclose
}
}}
\tikzstyle{map}=[draw,shape=NEbox,inner sep=7pt]
\tikzstyle{mapdag}=[draw,shape=SEbox,inner sep=7pt]
\tikzstyle{maptrans}=[draw,shape=SWbox,inner sep=7pt]
\tikzstyle{mapconj}=[draw,shape=NWbox,inner sep=7pt]
\tikzstyle{probs}=[shape=semicircle,fill=gray!40!white,draw=black,shape border rotate=180,minimum width=1.2cm]
\tikzstyle{arrs}=[-latex,font=\small,auto]
\tikzstyle{arrow plain}=[arrs]
\tikzstyle{arrow dashed}=[dashed,arrs]
\tikzstyle{arrow bold}=[very thick,arrs]
\tikzstyle{arrow hide}=[draw=white!0,-]
\tikzstyle{arrow reverse}=[latex-]
\tikzstyle{cdnode}=[]
\tikzstyle{gn}=[dot,fill=green,minimum width=0.3cm,inner sep=0pt]
\tikzstyle{rn}=[dot,fill=red,inner sep=0pt,minimum width=0.3cm]
\tikzstyle{bn}=[dot,fill=blue,minimum width=0.3cm]
\tikzstyle{rc}=[dot,thick,fill=white,draw = red,minimum width=0.3cm,inner sep=0pt]
\tikzstyle{gc}=[dot,thick,fill=white,draw= green,inner sep=0pt,minimum width=0.3cm]
\tikzstyle{bc}=[dot,thick,fill=white,draw= blue,minimum width=0.3cm]
\tikzstyle{label}=[circle,fill=white,minimum width=0.3cm]
\tikzstyle{H box}=[rectangle,fill=yellow,draw=black,xscale=1,yscale=1,font=\small,inner sep=0.75pt]
\tikzstyle{clocklabel}=[dot,fill=yellow,draw=black,font=\tiny,inner sep=0.75pt]
\tikzstyle{rsn}=[circle split,draw,fill=red,font=\tiny,inner sep=0.75pt]
\tikzstyle{gsn}=[circle split,draw,fill=green,font=\tiny,inner sep=0.75pt]
\tikzstyle{bsn}=[circle split,draw,fill=blue,font=\tiny,inner sep=0.75pt]
\tikzstyle{rsc}=[circle split,thick,draw= red,draw,fill=white,font=\tiny,inner sep=0.75pt]
\tikzstyle{gsc}=[circle split,thick,draw= green,draw,fill=white,font=\tiny,inner sep=0.75pt]
\tikzstyle{bsc}=[circle split,thick,draw= blue,draw,fill=white,font=\tiny,inner sep=0.75pt]
\tikzstyle{cnot}=[fill=white,shape=circle,inner sep=-1.4pt]
\tikzstyle{wire label}=[font=\tiny, auto]
\newcommand{\bra}[1]{\ensuremath{\left\langle #1 \right|}}
\newcommand{\ket}[1]{\ensuremath{\left|  #1 \right\rangle}}
\tikzstyle{cdiag}=[matrix of math nodes, row sep=3em, column sep=3em, text height=1.5ex, text depth=0.25ex,inner sep=0.5em]
\tikzstyle{arrow above}=[transform canvas={yshift=0.5ex}]
\tikzstyle{arrow below}=[transform canvas={yshift=-0.5ex}]
\newtheorem{Th}{Theorem}[section]
\newtheorem{theorem}[Th]{Theorem}
\newtheorem{proposition}[Th]{Proposition}
\newtheorem{lemma}[Th]{Lemma}
\newtheorem{corollary}[Th]{Corollary}
\newtheorem{definition}[Th]{Definition}
\newenvironment{proof}{\textbf{Proof:}}{\hfill$\Box$\newline}
\title{Equivalence of Local Complementation and Euler Decomposition in the Qutrit ZX-calculus}
\author{Xiaoyan Gong$^1$\qquad\qquad Quanlong Wang$^{1, ~2}$\\ School of Mathematics and Systems Science, Beihang University$^1$\\ Universit\'e de Lorraine, R\'egion Lorraine,  LORIA$^{ 2}$}
\begin{document}

\date{}\maketitle

\begin{abstract}
In this paper, we give a modified version of the qutrit ZX-calculus, by which we represent qutrit graph states as diagrams and prove that the qutrit version of local complementation property is true if and only if the qutrit Hadamard gate $H$ has an Euler decomposition into $4\pi/3$-green and
red rotations. This paves the way for studying the completeness of qutrit ZX-calculus for qutrit stabilizer quantum mechanics.

\end{abstract}

\section{Introduction}
The ZX-calculus for qubits  is an intuitive and powerful  graphical language. It is an important branch of category quantum mechanics introduced by Coecke and Duncan \cite{CoeckeDuncan}, which allows us to explicitly formulate quantum mechanics within the overall framework of symmetric monoidal categories. This graphical language characterizes complementarity of quantum observables (typically, Pauli Z and X spin observables). In addition, the ZX-calculus is universal and sound for pure qubit quantum mechanics. Although the overall ZX-calculus for pure state qubit quantum mechanics is incomplete \cite{Vladimir},  it is complete for stabilizer quantum mechanics \cite{Miriam1}.

It is well-known that the theory of quantum information and quantum computation is mainly based on qubits. However, qutrits are also useful for quantum information processing. For instance,   there does not exist a (2, 3) quantum secret threshold scheme for qubits in which each share
is also a qubit, while such a scheme for qutrits does exist \cite{Cleve}.  Moreover, better security can be achieved for quantum cryptography
using higher dimensional quantum system \cite{Bourennane}. As the same reason of introducing ZX-calculus for pure state qubit quantum mechanics, it is natural to consider a qutrit version of ZX-calculus. In fact, the ZX-calculus  for qutrits was established in \cite{BianWang1, BianWang2} and independently introduced as a typical special case of qudit ZX-calculus in \cite{Ranchin}.

In this paper, we treat qutrit ZX-calculus in a modified way. Comparing to the rules in \cite{BianWang2}, we add two new rules (S3) and (H2$^\prime$) while remove the (P2) rule which can be derived. The rules (S1) and (P1) are expanded but (K2) is reduced. In contrast to the rules in \cite{Ranchin}, the rules (S3), (P1) and (H2$^\prime$) are added, (S1) is expanded, but (D) is vanishing since we do not consider scalars in this paper. Also, we need not to make the assumption as adopted in \cite{Ranchin} that any topological deformation of the internal structure does not matter. With our current version of rules in hand, we can prove some useful graphical properties of qutrits such as the Hopf law, coincidence of all dualizers and commutativity of copy and co-copy.

Our main result in this paper is to establish the equivalence of local complementation and Hadamard decomposition in the qutrit ZX-calculus. The local complementation \cite{Bouchet,Kant} is a graph transformation which is a powerful tool for the study of graph states. As a class of especial stabilizer states, graph states \cite{Griffiths} is the specific algorithm resources in one-way quantum computing model, and has broad application in quantum information processing.
In \cite{DuncanPerdrix}, graph states are represented by diagrams of the qubit ZX-calculus, and it is further proved that the local complementation
property (Van den Nest's theorem) is true if and only if the Hadamard gate $H$ has an Euler decomposition into $\pi/2$-green and
red rotations. In this paper, we represent qutrit graph states by diagrams of the qutrit ZX-calculus and prove that the qutrit version of local complementation
property is true if and only if the qutrit Hadamard gate $H$ has an Euler decomposition into $4\pi/3$-green and
red rotations. This paves the way for  studying the completeness of qutrit ZX-calculus for qutrit stabilizer quantum mechanics.

\section{Qutrit ZX-calculus}
In this paper, we will work in a symmetric monoidal category and ignore non-zero scalars. First we list the rules of Qutrit ZX-calculus, where the angles $\alpha, \beta, \eta, \theta\in [0,2\pi)$.  Note that all the diagrams should be read from top to bottom.
\begin{figure}[!h]
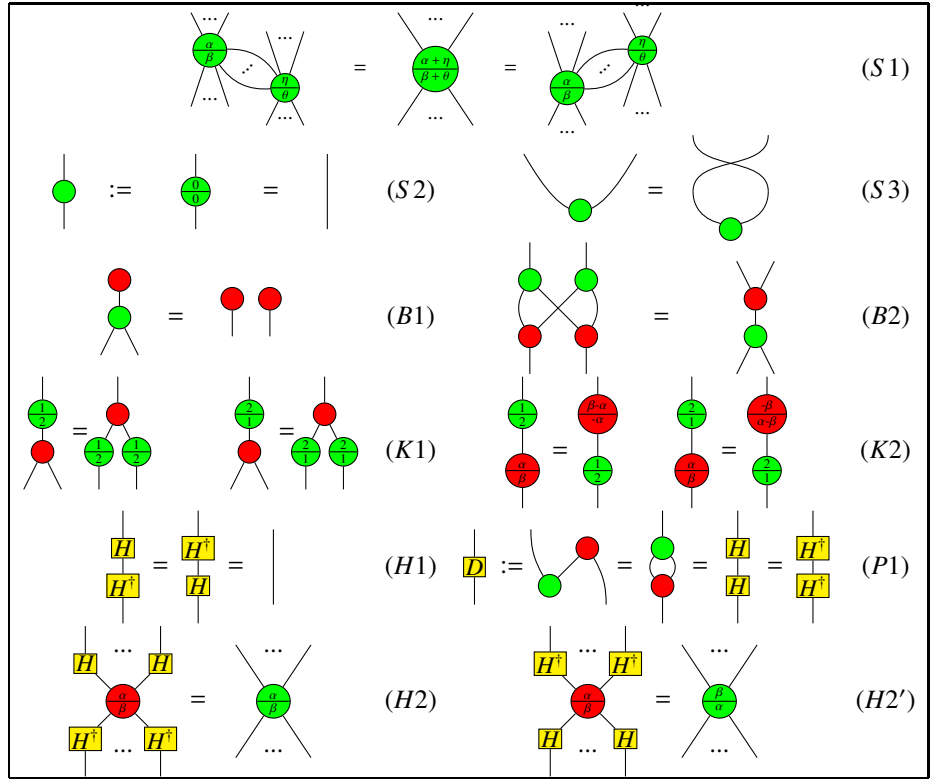

\begin{center}
\[
\quad \qquad\begin{array}{|cccc|}
\hline
\multicolumn{3}{|c}{%
\beginpgfgraphicnamed{Qutrits//spidernew}
\InputIfFileExists{Qutrits//spidernew.tikz}{}{\input{./figures/Qutrits//spidernew.tikz}}
\endpgfgraphicnamed}&(S1)\\
\beginpgfgraphicnamed{RGrelations//s2}
\InputIfFileExists{RGrelations//s2.tikz}{}{\input{./figures/RGrelations//s2.tikz}}
\endpgfgraphicnamed&(S2)&%
\beginpgfgraphicnamed{Qutrits//cupswap}
\InputIfFileExists{Qutrits//cupswap.tikz}{}{\input{./figures/Qutrits//cupswap.tikz}}
\endpgfgraphicnamed&(S3)\\
\beginpgfgraphicnamed{RGrelations//b1}
\InputIfFileExists{RGrelations//b1.tikz}{}{\input{./figures/RGrelations//b1.tikz}}
\endpgfgraphicnamed&(B1)&%
\beginpgfgraphicnamed{RGrelations//b2}
\InputIfFileExists{RGrelations//b2.tikz}{}{\input{./figures/RGrelations//b2.tikz}}
\endpgfgraphicnamed&(B2)\\
\beginpgfgraphicnamed{Qutrits//k1n}
\InputIfFileExists{Qutrits//k1n.tikz}{}{\input{./figures/Qutrits//k1n.tikz}}
\endpgfgraphicnamed&(K1)&%
\beginpgfgraphicnamed{Qutrits//k2n}
\InputIfFileExists{Qutrits//k2n.tikz}{}{\input{./figures/Qutrits//k2n.tikz}}
\endpgfgraphicnamed&(K2)\\
%
\beginpgfgraphicnamed{RGrelations//h1}
\InputIfFileExists{RGrelations//h1.tikz}{}{\input{./figures/RGrelations//h1.tikz}}
\endpgfgraphicnamed&(H1)&%
\beginpgfgraphicnamed{RGrelations//p1s}
\InputIfFileExists{RGrelations//p1s.tikz}{}{\input{./figures/RGrelations//p1s.tikz}}
\endpgfgraphicnamed&(P1)\\
%
\beginpgfgraphicnamed{RGrelations//h2}
\InputIfFileExists{RGrelations//h2.tikz}{}{\input{./figures/RGrelations//h2.tikz}}
\endpgfgraphicnamed&(H2)&%
\beginpgfgraphicnamed{Qutrits//h2prime}
\InputIfFileExists{Qutrits//h2prime.tikz}{}{\input{./figures/Qutrits//h2prime.tikz}}
\endpgfgraphicnamed&(H2^\prime)\\
\hline
\end{array}\]
\end{center}

  \caption{Qutrit ZX-calculus rules}\label{figure1}
\end{figure}

For convenience, we denote the
frequently used angles $\frac{2 \pi}{3}$ and   $\frac{4 \pi}{3}$ by $1$ and $2$ respectively.
The box H is called a Hadamard gate.

The diagrams in Qutrit ZX-calculus have a standard interpretation $\llbracket \cdot \rrbracket$ in the Hilbert spaces:
\begin{center}
\[
\left\llbracket %
\beginpgfgraphicnamed{RGgenerator//RGg_ezd}
\begin{tikzpicture}
	\begin{pgfonlayer}{nodelayer}
		\node [style=none] (0) at (0, -0.5) {};
		\node [style=gn] (1) at (0, -0) {};
	\end{pgfonlayer}
	\begin{pgfonlayer}{edgelayer}
		\draw (0.center) to (1);
	\end{pgfonlayer}
\end{tikzpicture}}
\endpgfgraphicnamed\right\rrbracket=\ket{+} \qquad
\left\llbracket%
\beginpgfgraphicnamed{RGgenerator//RGg_ez}
\begin{tikzpicture}
	\begin{pgfonlayer}{nodelayer}
		\node [style=none] (0) at (0, 0.5) {};
		\node [style=gn] (1) at (0, -0) {};
	\end{pgfonlayer}
	\begin{pgfonlayer}{edgelayer}
		\draw (0.center) to (1);
	\end{pgfonlayer}
\end{tikzpicture}}
\endpgfgraphicnamed\right\rrbracket=\bra{+} \qquad
\left\llbracket%
\beginpgfgraphicnamed{RGgenerator//RGg_dz}
\begin{tikzpicture}
	\begin{pgfonlayer}{nodelayer}
		\node [style=none] (0) at (0, 0.5) {};
		\node [style=none] (1) at (0.25, -0.5) {};
		\node [style=none] (2) at (-0.25, -0.5) {};
		\node [style=gn] (3) at (0, -0) {};
	\end{pgfonlayer}
	\begin{pgfonlayer}{edgelayer}
		\draw (0.center) to (3);
		\draw (3) to (2.center);
		\draw (3) to (1.center);
	\end{pgfonlayer}
\end{tikzpicture}}
\endpgfgraphicnamed\right\rrbracket=\ket{00}\bra{0}+\ket{11}\bra{1}+\ket{22}\bra{2}
\]
\[
\left\llbracket%
\beginpgfgraphicnamed{RGgenerator//RGg_dzd}
\begin{tikzpicture}
	\begin{pgfonlayer}{nodelayer}
		\node [style=gn] (0) at (0, -0) {};
		\node [style=none] (1) at (0, -0.5) {};
		\node [style=none] (2) at (0.25, 0.5) {};
		\node [style=none] (3) at (-0.25, 0.5) {};
	\end{pgfonlayer}
	\begin{pgfonlayer}{edgelayer}
		\draw (0) to (1.center);
		\draw (0) to (2.center);
		\draw (0) to (3.center);
	\end{pgfonlayer}
\end{tikzpicture}}
\endpgfgraphicnamed\right\rrbracket=\ket{0}\bra{00}+\ket{1}\bra{11}+\ket{2}\bra{22} \qquad
\left\llbracket%
\beginpgfgraphicnamed{RGgenerator//RGg_zph_ab}
\begin{tikzpicture}
	\begin{pgfonlayer}{nodelayer}
		\node [style=gsn] (0) at (0, -0) {$\alpha$\nodepart{lower}$\beta$};
		\node [style=none] (1) at (0, 0.5) {};
		\node [style=none] (2) at (0, -0.5) {};
	\end{pgfonlayer}
	\begin{pgfonlayer}{edgelayer}
		\draw (1.center) to (0);
		\draw (2.center) to (0);
	\end{pgfonlayer}
\end{tikzpicture}}
\endpgfgraphicnamed\right\rrbracket=\ket{0}\bra{0}+ e^{i \alpha}\ket{1}\bra{1}
+ e^{i \beta}\ket{2}\bra{2} \qquad
\]

\[
\left\llbracket%
\beginpgfgraphicnamed{RGgenerator//RGg_exd}
\begin{tikzpicture}
	\begin{pgfonlayer}{nodelayer}
		\node [style=none] (0) at (0, -0.5) {};
		\node [style=rn] (1) at (0, -0) {};
	\end{pgfonlayer}
	\begin{pgfonlayer}{edgelayer}
		\draw (0.center) to (1);
	\end{pgfonlayer}
\end{tikzpicture}}
\endpgfgraphicnamed\right\rrbracket=\ket{0} \qquad
\left\llbracket%
\beginpgfgraphicnamed{RGgenerator//RGg_ex}
\begin{tikzpicture}
	\begin{pgfonlayer}{nodelayer}
		\node [style=none] (0) at (0, 0.5) {};
		\node [style=rn] (1) at (0, -0) {};
	\end{pgfonlayer}
	\begin{pgfonlayer}{edgelayer}
		\draw (0.center) to (1);
	\end{pgfonlayer}
\end{tikzpicture}}
\endpgfgraphicnamed\right\rrbracket=\bra{0} \qquad
\left\llbracket%
\beginpgfgraphicnamed{RGgenerator//RGg_dx}
\begin{tikzpicture}
	\begin{pgfonlayer}{nodelayer}
		\node [style=none] (0) at (0, 0.5) {};
		\node [style=none] (1) at (0.25, -0.5) {};
		\node [style=none] (2) at (-0.25, -0.5) {};
		\node [style=rn] (3) at (0, -0) {};
	\end{pgfonlayer}
	\begin{pgfonlayer}{edgelayer}
		\draw (0.center) to (3);
		\draw (3) to (2.center);
		\draw (3) to (1.center);
	\end{pgfonlayer}
\end{tikzpicture}}
\endpgfgraphicnamed\right\rrbracket=\ket{++}\bra{ +}+\ket{\omega\omega}\bra{\omega}+\ket{\bar{\omega}\bar{\omega}}\bra{\bar{\omega}}
\]
\[
\left\llbracket%
\beginpgfgraphicnamed{RGgenerator//RGg_dxd}
\begin{tikzpicture}
	\begin{pgfonlayer}{nodelayer}
		\node [style=rn] (0) at (0, -0) {};
		\node [style=none] (1) at (0, -0.5) {};
		\node [style=none] (2) at (0.25, 0.5) {};
		\node [style=none] (3) at (-0.25, 0.5) {};
	\end{pgfonlayer}
	\begin{pgfonlayer}{edgelayer}
		\draw (0) to (1.center);
		\draw (0) to (2.center);
		\draw (0) to (3.center);
	\end{pgfonlayer}
\end{tikzpicture}}
\endpgfgraphicnamed\right\rrbracket=\ket{+}\bra{++}+\ket{\omega}\bra{\omega\omega}
+\ket{\bar{\omega}}\bra{\bar{\omega}\bar{\omega}} \qquad
\left\llbracket%
\beginpgfgraphicnamed{RGgenerator//RGg_xph_ab}
\begin{tikzpicture}
	\begin{pgfonlayer}{nodelayer}
		\node [style=rsn] (0) at (0, -0) {$\alpha$\nodepart{lower}$\beta$};
		\node [style=none] (1) at (0, 0.5) {};
		\node [style=none] (2) at (0, -0.5) {};
	\end{pgfonlayer}
	\begin{pgfonlayer}{edgelayer}
		\draw (1.center) to (0);
		\draw (2.center) to (0);
	\end{pgfonlayer}
\end{tikzpicture}}
\endpgfgraphicnamed\right\rrbracket=\ket{+}\bra{+}+ e^{i \alpha}\ket{\omega}\bra{\omega}
+ e^{i \beta}\ket{\bar{\omega}}\bra{\bar{\omega}}
\]

\[
\left\llbracket%
\beginpgfgraphicnamed{RGgenerator//RGg_Hada}
\begin{tikzpicture}
	\begin{pgfonlayer}{nodelayer}
		\node [style={H box}] (0) at (0, -0) {$H$};
		\node [style=none] (1) at (0, 0.5) {};
		\node [style=none] (2) at (0, -0.5) {};
	\end{pgfonlayer}
	\begin{pgfonlayer}{edgelayer}
		\draw (1.center) to (0);
		\draw (2.center) to (0);
	\end{pgfonlayer}
\end{tikzpicture}}
\endpgfgraphicnamed\right\rrbracket=\ket{+}\bra{0}+ \ket{\omega}\bra{1}+\ket{\bar{\omega}}\bra{2}=\ket{0}\bra{+}+ \ket{1}\bra{\bar{\omega}}+\ket{2}\bra{\omega}\]
\[
\left\llbracket%
\beginpgfgraphicnamed{RGgenerator//RGg_Hadad}
\begin{tikzpicture}
	\begin{pgfonlayer}{nodelayer}
		\node [style={H box}] (0) at (0, -0) {$H^\dagger$};
		\node [style=none] (1) at (0, -0.5) {};
		\node [style=none] (2) at (0, 0.5) {};
	\end{pgfonlayer}
	\begin{pgfonlayer}{edgelayer}
		\draw (2.center) to (0);
		\draw (1.center) to (0);
	\end{pgfonlayer}
\end{tikzpicture}}
\endpgfgraphicnamed\right\rrbracket=\ket{0}\bra{+}+ \ket{1}\bra{\omega}+\ket{2}\bra{\bar{\omega}}=\ket{+}\bra{0}+ \ket{\omega}\bra{2}+\ket{\bar{\omega}}\bra{1}
\]
\end{center}

where $\omega=e^{\frac{2}{3}\pi i},\bar{\omega}=e^{\frac{4}{3}\pi i}=\omega^2$, and
 \begin{center}
 $
   \left\{\begin{array}{rcl}
      \ket{+} & = & \ket{0}+\ket{1}+\ket{2}\\
      \ket{\omega} & = &  \ket{0}+\omega \ket{1}+\bar{\omega}\ket{2}\\
      \ket{\bar{\omega}} & = & \ket{0}+\bar{\omega}\ket{1}+\omega\ket{2}
    \end{array}\right.
   $
 \end{center}

Note that $\omega^3=1,~~1+\omega+\bar{\omega}=0$. We also use the following matrix form:
  $$\left\llbracket%
\beginpgfgraphicnamed{RGgenerator//RGg_zph_ab}
}
\endpgfgraphicnamed\right\rrbracket=
  \left(
\begin{array}{ccc}
 1 & 0 & 0 \\
 0 & e^{i\alpha} & 0\\
 0 & 0 & e^{i\beta}
\end{array}
\right),\quad\quad\quad
\left\llbracket%
\beginpgfgraphicnamed{RGgenerator//RGg_xph_ab}
}
\endpgfgraphicnamed\right\rrbracket=
  \left(
\begin{array}{ccc}
1+e^{i\alpha}+e^{i\beta} & 1+\bar{\omega} e^{i\alpha}+\omega e^{i\beta}& 1+\omega e^{i\alpha}+\bar{\omega}e^{i\beta} \\
1+\omega e^{i\alpha}+\bar{\omega}e^{i\beta}& 1+e^{i\alpha}+e^{i\beta} & 1+\bar{\omega}e^{i\alpha}+\omega e^{i\beta}\\
 1+\bar{\omega} e^{i\alpha}+\omega e^{i\beta}&1+\omega e^{i\alpha}+\bar{\omega}e^{i\beta} &1+e^{i\alpha}+e^{i\beta}
\end{array}
\right).
$$

Given the standard interpretation $\llbracket \cdot \rrbracket$, it is easy to check that the qutrit ZX-calculus is sound, i.e., any equality that can be derived graphically can also be derived using matrices.

\vspace{0.5cm}
Below we give some properties of the qutrit ZX-calculus.

\begin{lemma}\label{p2rul}
The rule denoted as (P2) in \cite{BianWang2} can be derived:

\ctikzfig{Qutrits//p2s}

\end{lemma}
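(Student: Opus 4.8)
The plan is to obtain (P2) as a consequence of the (expanded) rule (P1) together with the spider law (S1) and the Hadamard rules (H1), (H2), (H2$'$). The first step is to recognise the left-hand side of (P2) as a specialisation, or a Hadamard-conjugate, of the left-hand side of (P1): I would rewrite the diagram so that the phase node being copied/pushed through sits between a green and a red spider in exactly the configuration appearing on the left of (P1). Concretely, start from one side of (P2), use (S1) to split off a bare green (resp.\ red) spider where needed, and use (H1) to trade the relevant colour change for a Hadamard box (or vice versa), so that a literal instance of (P1) becomes visible.

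Next, apply (P1) to commute the phase through, then fuse the spiders that are left behind using (S1), adding their phases modulo $2\pi$; here the frequently-used angles $1 = 2\pi/3$ and $2 = 4\pi/3$, together with $1+1 = 2$, $1+2 \equiv 0$ and $2+2 \equiv 1$, are what make the bookkeeping close up. Finally, undo the auxiliary Hadamard boxes with (H2) and (H2$'$); this is the step that genuinely uses the new rule (H2$'$), since in the qutrit setting $H^2 \neq \mathrm{id}$ and one needs (H2$'$) to resolve $H^2$ on a spider, after which one simply reads off the right-hand side of (P2).

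The step I expect to be the main obstacle is this last normalisation of Hadamard boxes: one must keep careful track of which box is $H$ and which is its adjoint (unlike the qubit case, where $H = H^\dagger$ and $H^2 = \mathrm{id}$), and then check that the permutation of the phases $\alpha,\beta$ produced by the two colour changes matches exactly the phases on the right-hand side of (P2). A secondary subtlety is that, since arbitrary topological deformation of the interior of a diagram is not assumed, every rewrite must be applied to a sub-diagram that literally matches a rule instance; where the orientation is wrong I would first apply (S3) to reroute the relevant cap/cup so that the required redex appears verbatim. Soundness of the calculus under the matrix interpretation $\llbracket\cdot\rrbracket$ provides an independent check on the phase arithmetic throughout.
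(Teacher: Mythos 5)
Your strategy---realise the (P2) configuration as a Hadamard conjugate of a (P1) instance, apply (P1), and then eliminate the auxiliary Hadamard boxes via the colour-change rules---is exactly the route the paper takes; its proof cites precisely (H1), (H2), (P1) and (H2$^\prime$). The additional (S1)/(S3) bookkeeping you anticipate does not appear in the paper's derivation, but it is harmless and does not alter the approach.
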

\begin{proof}
\ctikzfig{Qutrits//p2ruleprf}
 Here we used rules (H1), (H2), (P1) and (H2$^\prime$).
\end{proof}

\begin{lemma}\label{swap}
A loop with same colors at the top and bottom is just a identity:
\begin{align*} 
\beginpgfgraphicnamed{Qutrits//swapbendnew}
\InputIfFileExists{Qutrits//swapbendnew.tikz}{}{\input{./figures/Qutrits//swapbendnew.tikz}}
\endpgfgraphicnamed
 \end{align*}
\end{lemma}
\begin{proof}
Since the green parts can be proven in a same way,   we only prove the following equation which follows from (S1) and (S3):
\begin{align*}
\beginpgfgraphicnamed{Qutrits//swapbendprf}
\InputIfFileExists{Qutrits//swapbendprf.tikz}{}{\input{./figures/Qutrits//swapbendprf.tikz}}
\endpgfgraphicnamed
 \end{align*}
The red color parts can be proven from the green parts by the rules (H1), (H2) and (H2$^\prime$).
\end{proof}

\begin{lemma}\label{dualizer}
All the dualizers are coincident:
\begin{align*}
\beginpgfgraphicnamed{RGrelations//deri3_4duliser}
\InputIfFileExists{RGrelations//deri3_4duliser.tikz}{}{\input{./figures/RGrelations//deri3_4duliser.tikz}}
\endpgfgraphicnamed
 \end{align*}
\end{lemma}

\begin{proof}
By lemma \ref{swap}, we have
\begin{align*}
\beginpgfgraphicnamed{Qutrits//dualizerprf1}
\InputIfFileExists{Qutrits//dualizerprf1.tikz}{}{\input{./figures/Qutrits//dualizerprf1.tikz}}
\endpgfgraphicnamed
 \end{align*}
In a similar way, we have
\begin{equation}\label{dualizerpr2}
\beginpgfgraphicnamed{Qutrits//dualizerprf2}
\InputIfFileExists{Qutrits//dualizerprf2.tikz}{}{\input{./figures/Qutrits//dualizerprf2.tikz}}
\endpgfgraphicnamed
 \end{equation}
 Finally,
 \begin{equation}\label{dualizerpr3}
\beginpgfgraphicnamed{Qutrits//dualizerprf3}
\InputIfFileExists{Qutrits//dualizerprf3.tikz}{}{\input{./figures/Qutrits//dualizerprf3.tikz}}
\endpgfgraphicnamed
 \end{equation}
Here we used the rule (P1) and equation (\ref{dualizerpr2}) for the derivation of the third equality in (\ref{dualizerpr3}):
\end{proof}

\begin{lemma}\label{hadslidegn}
The Hadamard gate $H$ as well as its inverse $H^{\dag}$ can slide along a green cup freely:
\begin{align*}
\beginpgfgraphicnamed{Qutrits//hadslidegn}
\InputIfFileExists{Qutrits//hadslidegn.tikz}{}{\input{./figures/Qutrits//hadslidegn.tikz}}
\endpgfgraphicnamed
\end{align*}
\end{lemma}

\begin{proof}
\begin{align*}
\beginpgfgraphicnamed{Qutrits//hadslidegnprf}
\InputIfFileExists{Qutrits//hadslidegnprf.tikz}{}{\input{./figures/Qutrits//hadslidegnprf.tikz}}
\endpgfgraphicnamed
\end{align*}
The second equality can be derived similarly using (H2$^\prime$), (H1), (P1) and Lemma \ref{dualizer}.
\end{proof}

Note that the color-swapped versions and upside-down versions of equations in Lemma \ref{hadslidegn} still hold.

In contrast to Lemma \ref{swap}, we have

\begin{lemma}\label{swaprgcapcup}
A loop with different colors at the top and bottom is  a square of the Hadamard gate:
\begin{align*}
\beginpgfgraphicnamed{Qutrits//swaprgcapcup}
\InputIfFileExists{Qutrits//swaprgcapcup.tikz}{}{\input{./figures/Qutrits//swaprgcapcup.tikz}}
\endpgfgraphicnamed
 \end{align*}
\end{lemma}
\begin{proof}
 We only prove the following:
\begin{align*}
\beginpgfgraphicnamed{Qutrits//swaprgcapcupprf}
\InputIfFileExists{Qutrits//swaprgcapcupprf.tikz}{}{\input{./figures/Qutrits//swaprgcapcupprf.tikz}}
\endpgfgraphicnamed
 \end{align*}
 where we used (H2), Lemma \ref{swap} and Lemma \ref{hadslidegn}.
\end{proof}


\begin{lemma}\label{rotationl}
A green dot or phase gate can slide to the opposite side of a red cup with angles upside-down flipped: 
 \begin{equation}\label{rotation}
\beginpgfgraphicnamed{RGrelations//deri2_rotate}
\InputIfFileExists{RGrelations//deri2_rotate.tikz}{}{\input{./figures/RGrelations//deri2_rotate.tikz}}
\endpgfgraphicnamed
 \end{equation}
\end{lemma}

\begin{proof}
\begin{align*}
\beginpgfgraphicnamed{Qutrits//rotationprf1}
\InputIfFileExists{Qutrits//rotationprf1.tikz}{}{\input{./figures/Qutrits//rotationprf1.tikz}}
\endpgfgraphicnamed
 \end{align*}
 \begin{align*}
\beginpgfgraphicnamed{Qutrits//rotationprf2}
\InputIfFileExists{Qutrits//rotationprf2.tikz}{}{\input{./figures/Qutrits//rotationprf2.tikz}}
\endpgfgraphicnamed
 \end{align*}
\end{proof}

Note that the upside-down versions and color-swapped versions of equations of (\ref{rotation}) still hold.

\begin{lemma}\label{copyvars}
The copy rule (B1) has the following variants:
\begin{align*}
\beginpgfgraphicnamed{Qutrits//copyvars}
\InputIfFileExists{Qutrits//copyvars.tikz}{}{\input{./figures/Qutrits//copyvars.tikz}}
\endpgfgraphicnamed
 \end{align*}
\end{lemma}
\begin{proof}
\ctikzfig{Qutrits//copyvarsprf}

 where we used (B1) and its color-swapped version, Lemma \ref{rotationl} as well as  its color-swapped version.
\end{proof}

Now we can have the qutrit version of Hopf law:

\begin{lemma}[Hopf law]\label{hopf}
\begin{align*}
\beginpgfgraphicnamed{RGrelations//deri1_3break}
\InputIfFileExists{RGrelations//deri1_3break.tikz}{}{\input{./figures/RGrelations//deri1_3break.tikz}}
\endpgfgraphicnamed
 \end{align*}
\end{lemma}

\begin{proof}
\begin{align*}
\beginpgfgraphicnamed{Qutrits//hopfprfnew}
\InputIfFileExists{Qutrits//hopfprfnew.tikz}{}{\input{./figures/Qutrits//hopfprfnew.tikz}}
\endpgfgraphicnamed
 \end{align*}
 Here we used (P1), (B1), (B2), Lemma \ref{copyvars} and Lemma \ref{swaprgcapcup}.
\end{proof}
Note that the color-swapped version of the equation in Lemma \ref{hopf} still hold.

\begin{lemma}\label{p1flip}
A green copy connected with a red co-copy is the same as its upside-down version:
\ctikzfig{Qutrits//p1flip}

\end{lemma}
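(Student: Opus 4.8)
The statement asserts that the diagram built by joining a green \emph{copy} (a green comultiplication node) to a red \emph{co-copy} (a red multiplication node) coincides with its top-to-bottom reflection — the reflection being the operation that interchanges comultiplications with multiplications while keeping the colors fixed. The plan is to prove it by a short diagrammatic chain anchored on the rule (P1), using the structural lemmas already established to absorb the discrepancies the reflection introduces.

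First I would apply (P1) (together with its color-swapped and upside-down variants, which are legitimate here since (P1) has been stated in expanded form) to rewrite the green-copy–red-co-copy diagram into its (P1)-normal form: a configuration built from green and red cups and caps, dualizers, and possibly $1$- and $2$-labelled phase dots. The value of this move is that the reflection acts on each of these pieces in a way we already control — cups swap with caps, a same-colored cup–cap loop is the identity (Lemma \ref{swap}) while a different-colored one is $H^2$ (Lemma \ref{swaprgcapcup}), every dualizer may be replaced by a single canonical one (Lemma \ref{dualizer}), and a phase dot slides across a cup at the cost of flipping its angles upside down (Lemma \ref{rotationl}). In the qutrit setting this last ``flip'' of angles is exactly the exchange $1 \leftrightarrow 2$, i.e. $\omega \leftrightarrow \bar\omega$, and it is this that makes the reflected pieces match up rather than genuinely differ. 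Having reflected the (P1)-normal form, I would use Lemma \ref{dualizer} to identify all dualizers, Lemma \ref{rotationl} and the Hadamard-slide Lemma \ref{hadslidegn} to move the remaining dots and Hadamard boxes back to their original positions, Lemma \ref{copyvars} (if needed) to re-attach a copied dot, and finally read (P1) backwards to fold everything into the upside-down green-co-copy–red-copy diagram. If, as I expect, the right-hand side of (P1) is already manifestly reflection-symmetric up to dualizers, the whole argument collapses to: expand by (P1), apply Lemma \ref{dualizer}, contract by the reflected (P1).

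The step I expect to be the main obstacle is the dualizer-and-phase bookkeeping peculiar to qutrits. Unlike the qubit case, the green and red Frobenius algebras here are not naively self-dual, so the cups and caps one writes down carry implicit dualizers whose orientation matters; reflecting the diagram reverses that orientation, and it is precisely the coincidence of all dualizers (Lemma \ref{dualizer}) together with the fact that reflection conjugates $2\pi/3$-phases into $4\pi/3$-phases that must conspire to make the two sides agree. Getting these phases and dualizers to cancel exactly — rather than leaving behind a stray $H^2$ or a stray $2$-labelled dot — is where the care is needed; everything else is routine rewriting with (P1), (B1), (B2) and the slide lemmas.
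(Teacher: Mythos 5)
You have correctly identified (P1) as the anchor, and your ``expand by (P1), observe symmetry, contract by the reflected (P1)'' collapse is in the same spirit as what the paper does. But the paper's mechanism is different and much lighter: it applies (P1) and then uses the color-change rules (H2) and (H2$^\prime$) together with (H1) to turn the result into the upside-down diagram --- the top--bottom flip is realized by swapping colors via Hadamard boxes and cancelling them with (H1), not by reflecting a cup/cap decomposition and tracking dualizers. None of Lemmas \ref{swap}, \ref{swaprgcapcup}, \ref{dualizer}, \ref{rotationl} or \ref{copyvars} is needed, and the elaborate dualizer-and-phase bookkeeping you anticipate as the main obstacle simply does not arise.

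The genuine gap is that your argument is conditional on a guessed shape of (P1)'s right-hand side (``cups and caps, dualizers, and possibly $1$- and $2$-labelled phase dots'') and on that shape being ``manifestly reflection-symmetric up to dualizers'' --- and you never discharge either assumption. As written this is a plan whose success you yourself flag as uncertain (the worry about a stray $H^2$ or a stray $2$-labelled dot left over). To make it a proof you would have to write down (P1) explicitly, verify the symmetry claim, and check that the angle exchange $1 \leftrightarrow 2$ produced by sliding phases across cups genuinely cancels. The paper sidesteps all of this: once (P1) is applied, the color-change rules plus (H1) give the upside-down version directly, which is why its proof is one line.
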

\begin{proof}
It follows directly from the rule (P1), (H1) and the color change rules (H2) and ($H2^\prime$).
\end{proof}

Similar to the qubit case proof of commutativity of a green copy  in  \cite{bpw}, we can also prove that a green copy in qutrit ZX-calculus is commutative.

\begin{lemma}\label{greencopy}
\begin{align*}
\beginpgfgraphicnamed{Qutrits//greencommutenew}
\InputIfFileExists{Qutrits//greencommutenew.tikz}{}{\input{./figures/Qutrits//greencommutenew.tikz}}
\endpgfgraphicnamed
 \end{align*}
\end{lemma}

\begin{proof}
\begin{align*}
\beginpgfgraphicnamed{Qutrits//cogreencommuteprf}
\InputIfFileExists{Qutrits//cogreencommuteprf.tikz}{}{\input{./figures/Qutrits//cogreencommuteprf.tikz}}
\endpgfgraphicnamed
 \end{align*}
Here we used (S2), (B2), Lemma \ref{hopf}, (P1), Lemma \ref{p2rul} and (H1).
\end{proof}

Note that and its color-swapped and upside-down versions still hold.

\begin{lemma}\label{controlnotslide}
There is no horizontally connected controlled-NOT gate in the qutrit ZX-calculus:
\begin{align*}
\beginpgfgraphicnamed{Qutrits//controlnotslide}
\InputIfFileExists{Qutrits//controlnotslide.tikz}{}{\input{./figures/Qutrits//controlnotslide.tikz}}
\endpgfgraphicnamed
\end{align*}
\end{lemma}

\begin{proof}
We only prove the first equality, the others can be proven similarly.
\begin{align*}
\beginpgfgraphicnamed{Qutrits//controlnotslideprf}
\InputIfFileExists{Qutrits//controlnotslideprf.tikz}{}{\input{./figures/Qutrits//controlnotslideprf.tikz}}
\endpgfgraphicnamed
 \end{align*}

\end{proof}

\begin{lemma}\label{quasibialgebra}
There is another form of the bialgebra rule:
\begin{align*}
\beginpgfgraphicnamed{Qutrits//quasibialgebra}
\InputIfFileExists{Qutrits//quasibialgebra.tikz}{}{\input{./figures/Qutrits//quasibialgebra.tikz}}
\endpgfgraphicnamed
\end{align*}
\end{lemma}
\begin{proof}
\begin{align*}
\beginpgfgraphicnamed{Qutrits//quasibialgebraprf}
\InputIfFileExists{Qutrits//quasibialgebraprf.tikz}{}{\input{./figures/Qutrits//quasibialgebraprf.tikz}}
\endpgfgraphicnamed
 \end{align*}
Here we used (B2), Lemma \ref{p2rule}, Lemma \ref{p1flip}, Lemma \ref{hadslidegn} and Lemma \ref{controlnotslide}.
\end{proof}

By the color change rules, we get the color-swapped version of the equation in this lemma.

\section{Qutrit graph states}

Let $G = (V,E)$ be a graph with $n$ vertices $V$ , each
corresponding to a qutrit, and a collection $E$ of undirected
edges connecting pairs of distinct vertices (no self loops).
Multiple edges are allowed, as long as the multiplicity (weight) does not exceed 2.

\begin{definition}[Graph State]\cite{Griffiths}
A qutrit graph state can be defined as 
$$\ket{G}=\mathcal{U}(\ket{+}^{\otimes n}),$$ 
where $\ket{+}=\frac{1}{\sqrt{3}}(\ket{0}+\ket{1}+\ket{2})$,
$\mathcal{U}=\prod_{\{l,m\}\in E}(C_{lm})^{\Gamma_{lm}}$,  $C_{lm}=\Sigma_{j=0}^{2}\Sigma_{k=0}^{2}\omega^{jk}\ket{jk}\bra{jk}$, $\omega=e^{i\frac{2\pi}{3}}$, the $lm$ element $\Gamma_{lm}$
of the adjacency matrix $\Gamma$ is the number of edges connecting vertex $l$ with vertex $m$.
\end{definition}

 By the standard interpretation, we can check that
\[
C_{lm}=\left\llbracket%
\beginpgfgraphicnamed{Qutrits//controlz}
\InputIfFileExists{Qutrits//controlz.tikz}{}{\input{./figures/Qutrits//controlz.tikz}}
\endpgfgraphicnamed\right\rrbracket=\left\llbracket %
\beginpgfgraphicnamed{Qutrits//controlz2}
\InputIfFileExists{Qutrits//controlz2.tikz}{}{\input{./figures/Qutrits//controlz2.tikz}}
\endpgfgraphicnamed\right\rrbracket
\]

Graphically, we have
\begin{lemma}\label{controlz}
\begin{align*}
\beginpgfgraphicnamed{Qutrits//controlz}
\InputIfFileExists{Qutrits//controlz.tikz}{}{\input{./figures/Qutrits//controlz.tikz}}
\endpgfgraphicnamed=%
\beginpgfgraphicnamed{Qutrits//controlz2}
\InputIfFileExists{Qutrits//controlz2.tikz}{}{\input{./figures/Qutrits//controlz2.tikz}}
\endpgfgraphicnamed=: %
\beginpgfgraphicnamed{Qutrits//controlz3}
\InputIfFileExists{Qutrits//controlz3.tikz}{}{\input{./figures/Qutrits//controlz3.tikz}}
\endpgfgraphicnamed
\end{align*}
\end{lemma}

\begin{proof}
\begin{align*}
\beginpgfgraphicnamed{Qutrits//controlzprf}
\InputIfFileExists{Qutrits//controlzprf.tikz}{}{\input{./figures/Qutrits//controlzprf.tikz}}
\endpgfgraphicnamed
 \end{align*}

\end{proof}

By direct calculation,
\[
C_{lm}^2 =\Sigma_{j=0}^{2}\Sigma_{k=0}^{2}\omega^{2jk}\ket{jk}\bra{jk}=\left\llbracket%
\beginpgfgraphicnamed{Qutrits//controlzsquare}
\InputIfFileExists{Qutrits//controlzsquare.tikz}{}{\input{./figures/Qutrits//controlzsquare.tikz}}
\endpgfgraphicnamed\right\rrbracket=\left\llbracket %
\beginpgfgraphicnamed{Qutrits//controlzsquare2}
\InputIfFileExists{Qutrits//controlzsquare2.tikz}{}{\input{./figures/Qutrits//controlzsquare2.tikz}}
\endpgfgraphicnamed\right\rrbracket.
\]

Similar to the proof of Lemma \ref{controlz}, we have
\begin{lemma}\label{controlzsquare}
\begin{align*}
\beginpgfgraphicnamed{Qutrits//controlzsquare}
\InputIfFileExists{Qutrits//controlzsquare.tikz}{}{\input{./figures/Qutrits//controlzsquare.tikz}}
\endpgfgraphicnamed=%
\beginpgfgraphicnamed{Qutrits//controlzsquare2}
\InputIfFileExists{Qutrits//controlzsquare2.tikz}{}{\input{./figures/Qutrits//controlzsquare2.tikz}}
\endpgfgraphicnamed=: %
\beginpgfgraphicnamed{Qutrits//controlzsquare3}
\InputIfFileExists{Qutrits//controlzsquare3.tikz}{}{\input{./figures/Qutrits//controlzsquare3.tikz}}
\endpgfgraphicnamed
\end{align*}
\end{lemma}

Furthermore, $C_{lm}^3 =\Sigma_{j=0}^{2}\Sigma_{k=0}^{2}\omega^{3jk}\ket{jk}\bra{jk}=I$. Graphically, we have

\begin{lemma}\label{twocontrolz}
\begin{align*}
\beginpgfgraphicnamed{Qutrits//twocontrolz3}
\InputIfFileExists{Qutrits//twocontrolz3.tikz}{}{\input{./figures/Qutrits//twocontrolz3.tikz}}
\endpgfgraphicnamed= %
\beginpgfgraphicnamed{Qutrits//controlzsquare3}
\InputIfFileExists{Qutrits//controlzsquare3.tikz}{}{\input{./figures/Qutrits//controlzsquare3.tikz}}
\endpgfgraphicnamed,\quad\quad %
\beginpgfgraphicnamed{Qutrits//twocontrolzsquare3}
\InputIfFileExists{Qutrits//twocontrolzsquare3.tikz}{}{\input{./figures/Qutrits//twocontrolzsquare3.tikz}}
\endpgfgraphicnamed= %
\beginpgfgraphicnamed{Qutrits//controlz3}
\InputIfFileExists{Qutrits//controlz3.tikz}{}{\input{./figures/Qutrits//controlz3.tikz}}
\endpgfgraphicnamed,
\quad\quad %
\beginpgfgraphicnamed{Qutrits//hhdaggern}
\InputIfFileExists{Qutrits//hhdaggern.tikz}{}{\input{./figures/Qutrits//hhdaggern.tikz}}
\endpgfgraphicnamed= %
\beginpgfgraphicnamed{Qutrits//hdaggerhn}
\InputIfFileExists{Qutrits//hdaggerhn.tikz}{}{\input{./figures/Qutrits//hdaggerhn.tikz}}
\endpgfgraphicnamed= %
\beginpgfgraphicnamed{Qutrits//parallel}
\begin{tikzpicture}
	\begin{pgfonlayer}{nodelayer}
		\node [style=none] (0) at (-0.25, 0.75) {};
		\node [style=none] (1) at (-0.25, -0.75) {};
		\node [style=none] (2) at (0.25, 0.75) {};
		\node [style=none] (3) at (0.25, -0.75) {};
	\end{pgfonlayer}
	\begin{pgfonlayer}{edgelayer}
		\draw (0.center) to (1.center);
		\draw (2.center) to (3.center);
	\end{pgfonlayer}
\end{tikzpicture}}
\endpgfgraphicnamed.
\end{align*}


\end{lemma}
\begin{proof}
We only prove the first equation.
\begin{align*}
\beginpgfgraphicnamed{Qutrits//twocontrolzprf}
\InputIfFileExists{Qutrits//twocontrolzprf.tikz}{}{\input{./figures/Qutrits//twocontrolzprf.tikz}}
\endpgfgraphicnamed
 \end{align*}

\end{proof}

Now we can represent any qutrit graph state in the ZX-calculus:
\begin{proposition}
A qutrit graph state $\ket{G}$, where $G = (E;V)$ is an n-vertex graph, is represented in the graphical
calculus as follows:
\begin{itemize}
 \item  for each vertex $v \in V$, a green node with one output, and
\item  for each single edge $\{u,v\}\in E$, an $H$ node connected to the green nodes representing vertices $u$
and $v$,
\item  for each double edge $\{u,v\}\in E$, an $H^{\dagger}$ node connected to the green nodes representing vertices $u$
and $v$.

\end{itemize}
\end{proposition}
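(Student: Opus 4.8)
The plan is to unfold the definition of $\ket{G} = \mathcal{U}(\ket{+}^{\otimes n})$ and translate each ingredient into the graphical calculus, then assemble the pieces. First I would observe that the state $\ket{+}$ is exactly the interpretation of a green node with a single output (this is recorded in the standard interpretation table as $\llbracket \tikzfig{RGgenerator//RGg_ezd}\rrbracket = \ket{+}$, up to the scalar $\tfrac{1}{\sqrt 3}$ which we ignore), so $\ket{+}^{\otimes n}$ is represented by $n$ parallel green nodes, one per vertex $v \in V$. This establishes the first bullet.

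Next I would handle the edge operators. By Lemma \ref{controlz}, the operator $C_{lm}$ is represented graphically by the diagram $\tikzfig{Qutrits//controlz3}$, which is precisely an $H$ node connecting the two wires corresponding to vertices $l$ and $m$; and by Lemma \ref{controlzsquare}, $C_{lm}^2$ is represented by $\tikzfig{Qutrits//controlzsquare3}$, an $H^\dagger$ node connecting those two wires. Since $\Gamma_{lm} \in \{0,1,2\}$ by the single/double-edge hypothesis on $G$, each factor $(C_{lm})^{\Gamma_{lm}}$ of $\mathcal{U}$ is either absent, an $H$ node, or an $H^\dagger$ node on the relevant pair of wires, giving exactly the second and third bullets.

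Finally I would assemble: $\mathcal{U}$ is the composite $\prod_{\{l,m\}\in E}(C_{lm})^{\Gamma_{lm}}$, so applying $\mathcal{U}$ to $\ket{+}^{\otimes n}$ means post-composing the $n$-green-node diagram with all these $H$ and $H^\dagger$ nodes. The only subtlety is that the factors in $\mathcal{U}$ need not commute as written, but the corresponding diagrams do compose unambiguously: distinct edges of $G$ touch the wires of their endpoints, and by Lemma \ref{twocontrolz} (in particular $\tikzfig{Qutrits//hhdaggern}= \tikzfig{Qutrits//hdaggerhn}= \tikzfig{Qutrits//parallel}$, expressing $C_{lm}^3 = I$ graphically, together with the commutation of edge operators sharing a vertex which follows since all edge operators are diagonal in the green basis and green copy is commutative by Lemma \ref{greencopy}) the graphical representation is well-defined independently of the order and multiplicity bookkeeping. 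Stacking the green nodes with the $H$/$H^\dagger$ decorations then yields exactly the claimed diagram.

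The main obstacle is not any single hard computation but rather making the "assembly" rigorous: one must check that the graphical diagram built from the three bullets really does have interpretation $\ket{G}$, which amounts to verifying that composing the green-node diagrams with the edge decorations commutes with taking $\llbracket\cdot\rrbracket$ — this is where soundness of the calculus and Lemmas \ref{controlz}, \ref{controlzsquare}, \ref{twocontrolz} do the work, and where one must be careful that multi-edges of weight $2$ (and the absence of weight-$3$) are faithfully captured by the single-$H^\dagger$-node convention.
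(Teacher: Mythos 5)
Your proposal is correct and follows exactly the route the paper intends: the paper states this proposition without an explicit proof, relying on the immediately preceding Lemmas \ref{controlz}, \ref{controlzsquare} and \ref{twocontrolz} (the graphical forms of $C_{lm}$, $C_{lm}^2$ and $C_{lm}^3=I$) together with the standard interpretation of the green node as $\ket{+}$, which is precisely the assembly you carry out. Your additional care about the order-independence of the edge operators is a reasonable elaboration of what the paper leaves implicit, not a departure from its approach.
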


\section{ Local Complementation}
In this section, we prove the equivalence of local complementation
and Hadamard decomposition in the qutrit ZX-calculus.

\begin{definition}[Local Complementation]\cite{mmp}
Let $G = (V,\Gamma)$ be a multiple graph with adjacency matrix $\Gamma$ and multiplicity not exceed 2, $\lambda\in \{1, 2\}$. The $\lambda$-local complementation at the vertex $u$ is the is the multigraph $G*^{\lambda}u=(V,\Gamma^\prime)$ such that $\forall v,w \in V, v \neq w, \Gamma^\prime(v,w) = [\Gamma(v,w) + \lambda\Gamma(v, u)\cdot\Gamma(u,w)] (mod 3)$.
\end{definition}

For example, let $G$ be the following graph
\begin{align*}
\beginpgfgraphicnamed{Qutrits//triangle}
\InputIfFileExists{Qutrits//triangle.tikz}{}{\input{./figures/Qutrits//triangle.tikz}}
\endpgfgraphicnamed
\end{align*}
The corresponding graph state $\ket{G}$ is
\begin{align*}
\beginpgfgraphicnamed{Qutrits//trianglestate}
\InputIfFileExists{Qutrits//trianglestate.tikz}{}{\input{./figures/Qutrits//trianglestate.tikz}}
\endpgfgraphicnamed
\end{align*}
Its 1-local complementation $G*^{1}1$ at the vertex $1$ is
\begin{align*}
\beginpgfgraphicnamed{Qutrits//trianglelocalcom}
\InputIfFileExists{Qutrits//trianglelocalcom.tikz}{}{\input{./figures/Qutrits//trianglelocalcom.tikz}}
\endpgfgraphicnamed
\end{align*}
The corresponding graph state $\ket{G*^{1}1}$ is
\begin{align*}
\beginpgfgraphicnamed{Qutrits//trianglelocalcomstate}
\InputIfFileExists{Qutrits//trianglelocalcomstate.tikz}{}{\input{./figures/Qutrits//trianglelocalcomstate.tikz}}
\endpgfgraphicnamed
\end{align*}
\begin{theorem}[Local Complementation Property]\cite{Keet}\label{lcproperty}
For any 3-multigraph  $G = (V,\Gamma)$, any $u\in V$, and any $\lambda\in \{1, 2\}$, there exists a local unitary transformation $U$
 such that $\ket{G*^{\lambda}u}=U\ket{G}$.
\end{theorem}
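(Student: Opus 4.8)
The plan is to reduce the Local Complementation Property to a purely diagrammatic identity about graph states, and then to prove that identity using the Euler decomposition of the qutrit Hadamard gate together with the rewrite rules already established. First I would fix notation: write $\ket{G}$ in the ZX form from the Proposition, with one green node per vertex and an $H$-box (resp.\ $H^\dagger$-box) on each single (resp.\ double) edge. The candidate local unitary at the vertex $u$ should be $U = \bigotimes_{v} U_v$ where $U_v$ is the identity for $v$ not adjacent to $u$, $U_u$ is a fixed single-qutrit Clifford (a green/red phase combination coming from the Euler decomposition of $H$), and for each neighbour $v$ of $u$ the operator $U_v$ is a red rotation (a red $\lambda'\Gamma(u,v)$-phase for a suitable $\lambda'$ depending on $\lambda$), chosen exactly so that, after applying $U$ to $\ket{G}$ and pushing the rotations through the diagram, the edge weight between $v$ and $w$ changes by $\lambda\,\Gamma(v,u)\Gamma(u,w)\pmod 3$, matching the definition of $G*^{\lambda}u$.

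The key steps, in order, are: (1) State the Euler decomposition $H = $ (green $2$-rotation)(red $2$-rotation)(green $2$-rotation) up to scalar — this is the ``Hadamard decomposition'' side of the main equivalence, so I would invoke it as a hypothesis here and note that the full paper proves the converse direction elsewhere; actually, since the theorem is stated unconditionally, I would instead derive the needed instance of the decomposition from the rules in Figure~1 (rules (H1), (H2), (H2$^\prime$), (P1) suffice to obtain the colour-change behaviour we need). (2) Apply $U_u$ at the vertex $u$ and use the Euler decomposition to turn $U_u$ together with the $H$-boxes hanging off $u$ into a configuration where a single green node at $u$ is connected by bare wires (via green copy/co-copy spiders) to copies sent down each incident edge; here Lemma~\ref{copyvars}, Lemma~\ref{greencopy} and the spider law (S1) do the bookkeeping. (3) For each pair of neighbours $v,w$ of $u$, the two wires arriving from $u$ meet and, after using the controlled-$Z$ characterisations (Lemma~\ref{controlz}, Lemma~\ref{controlzsquare}, Lemma~\ref{twocontrolz}) and the bialgebra-type rules (B1), (B2), Lemma~\ref{quasibialgebra}), produce an extra $C_{vw}^{\lambda\Gamma(u,v)\Gamma(u,w)}$ edge between $v$ and $w$; the red rotations $U_v$ on the neighbours are precisely what is needed to keep the ``old'' $u$–$v$ edges intact while this happens (they slide through the green nodes by Lemma~\ref{rotationl} and its colour-swapped form). (4) Collect all the new edges, reduce multiplicities modulo $3$ using Lemma~\ref{twocontrolz} ($C^3 = I$), and read off that the resulting diagram is exactly $\ket{G*^{\lambda}u}$; since every $U_v$ is a single-qutrit unitary, $U$ is local, completing the proof.

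The main obstacle I anticipate is step (3): controlling the combinatorics of how the fused wires from $u$ generate \emph{exactly} the term $\lambda\,\Gamma(v,u)\Gamma(u,w)$ and not spurious extra phases, and simultaneously verifying that the pre-existing $u$–$v$ edges and $u$–$w$ edges survive unchanged. This is where the choice of the constant in front of the red rotations on the neighbours matters, and where one must be careful about the two cases $\lambda = 1$ and $\lambda = 2$ (equivalently single vs.\ double edges incident to $u$), since $H$ and $H^\dagger$ behave differently and the $4\pi/3$ in the Euler decomposition is what makes the qutrit bookkeeping close up mod $3$. A secondary subtlety is the self-loop issue: the definition of $G*^{\lambda}u$ only changes $\Gamma'(v,w)$ for $v \neq w$, so one must check that no self-loop is created at $u$ or at any neighbour — this follows because the only thing that happens at $u$ itself is the application of the single-qutrit Clifford $U_u$, which by the Euler decomposition reduces to a trivial (identity-up-to-scalar) contribution once absorbed into the incident $H$-boxes. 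I would isolate these case analyses into a short lemma before the main argument to keep the proof readable.
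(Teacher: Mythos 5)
The first thing to note is that the paper does not prove this statement at all: it is imported verbatim from \cite{Keet}, where it is established in the stabilizer formalism by conjugating the generators $K_v = X_v\prod_{w} Z_w^{\Gamma(v,w)}$ of the stabilizer group of $\ket{G}$ by an explicit local Clifford and checking that the result generates the stabilizer group of $\ket{G*^{\lambda}u}$. So there is no proof in the paper to compare against; what you have written is, in effect, a sketch of the ``Euler decomposition implies local complementation'' half of Theorem \ref{lceuler} (Lemmas \ref{trianglebila}, \ref{hadimplylocalcomple}, \ref{hadimplylocalcomple2}, \ref{hadimplylocalcomplestar} and the general-case paragraph), transplanted to serve as a proof of the semantic statement. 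That transplant is legitimate in principle, since the calculus is sound and a diagrammatic derivation therefore yields a true equation of linear maps.

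As written, however, the proposal has two genuine gaps. First, in step (1) you claim that the needed instance of the Euler decomposition can be derived from (H1), (H2), (H2$^\prime$) and (P1). This is unjustified and, in the diagrammatic sense, almost certainly false: the entire point of Theorem \ref{lceuler}, following \cite{DuncanPerdrix}, is that the Euler decomposition is \emph{equivalent} to the local complementation property rather than a consequence of the other rules, and if your derivation existed the paper's main result would collapse to a triviality. For the semantic statement you do not need any diagrammatic derivation here --- you need only verify the decomposition as a $3\times 3$ matrix identity under the standard interpretation $\llbracket\cdot\rrbracket$ --- but you never do this, so the hypothesis on which your whole argument rests is left unestablished. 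Second, step (3), which you correctly identify as the crux, is precisely where all the work lies: in the paper it occupies four lemmas plus the reduction of a general neighbourhood to the complete-graph case via $C_{lm}^{3}=I$, with an induction on the number of neighbours. Your proposal asserts that the fused wires ``produce an extra $C_{vw}^{\lambda\Gamma(u,v)\Gamma(u,w)}$ edge'' without exhibiting the bialgebra computation or the induction, so the combinatorial heart of the argument is missing. A direct stabilizer argument in the style of \cite{Keet} would be both shorter and complete, and would avoid any dependence on the diagrammatic machinery.
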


\begin{theorem}\label{lceuler}
The local complementation property is true if and only if $H$ can be decomposed as follows:

\ctikzfig{Qutrits//HadaDecom}
\end{theorem}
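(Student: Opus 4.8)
The plan is to prove the two implications separately, in each direction reducing to a purely diagrammatic claim settled with the rules and lemmas of Section~2.

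\textbf{Euler decomposition $\Rightarrow$ local complementation.} Assume the displayed equation for $H$. By the color-change rules (H1), (H2), (H2$^\prime$) and by taking daggers it yields the analogous decomposition of $H^\dagger$, and by Lemma~\ref{twocontrolz} (which encodes $C_{lm}^3=\mathrm{id}$) all edge multiplicities may be read modulo $3$. Fix a $3$-multigraph $G$, a vertex $u$, and $\lambda\in\{1,2\}$. Since $G*^{\lambda}u$ has the same neighbourhood of $u$ as $G$ and differs only in the edges among $N(u)$, and since graph states are green nodes joined by $H$- and $H^\dagger$-edges, everything localises to the fragment on $\{u\}\cup N(u)$. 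The candidate is a local unitary $U$ supported on $\{u\}\cup N(u)$, built from a green rotation on $u$ (the $4\pi/3$-rotation when $\lambda=1$, scaled by $\lambda$ in general) together with red phase corrections on the neighbours whose size depends on $\lambda$ and $\Gamma(u,v)$. Starting from the graph-state diagram of $G*^{\lambda}u$, the computation runs: absorb the new $H$/$H^\dagger$-edges among $N(u)$ toward $u$ by the bialgebra-type rules (B2), Lemma~\ref{quasibialgebra} and Lemma~\ref{controlnotslide}, collapsing the complete-graph pattern on $N(u)$ through the green spider $u$; rewrite the $H$/$H^\dagger$-edges incident to $u$ by the Euler decomposition, producing green--red--green sandwiches; fuse the green parts into $u$ and into the $v$'s by (S1); and convert what remains with (H2)/(H2$^\prime$), leaving exactly $U$ applied to the graph-state diagram of $G$. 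The same manipulation with the phases and multiplicities rescaled covers $\lambda=1$ and $\lambda=2$ simultaneously.

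\textbf{Local complementation $\Rightarrow$ Euler decomposition.} Here we specialise the local complementation property to a minimal witness, for instance the path $v-u-w$ on three vertices with $\lambda=1$, for which $G*^{1}u$ is the triangle on $\{u,v,w\}$ (equivalently one may use the worked triangle example already displayed in the text). Local complementation supplies a local unitary $U=U_u\otimes U_v\otimes U_w$ with $\ket{G*^{1}u}=U\ket{G}$. Drawing both graph states and using Section~2 to push $U_v,U_w$ along the two $H$-edges and $U_u$ into the green spider $u$, all the shared structure cancels and the residual equation says precisely that one $H$-box equals a green--red--green composite with the stated phases, i.e.\ the displayed decomposition. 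A small extra argument, using the rigidity of green and red phase nodes sliding past $H$-edges (the lemmas of Section~2), is needed to check that the freedom in $U$ cannot meet the matching with anything weaker than the full decomposition.

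\textbf{Main obstacle.} The delicate part is the forward direction: organising the bialgebra-style rewriting around $u$ so that it realises the adjacency update $\Gamma^\prime(v,w)=\Gamma(v,w)+\lambda\,\Gamma(v,u)\,\Gamma(u,w)\pmod 3$ for all pairs of neighbours at once, while correctly producing and cancelling the local corrections on every $v\in N(u)$, using only the already-available rules and uniformly in $\lambda\in\{1,2\}$ and in the edge multiplicities. The reverse direction is comparatively short once the witness graph is fixed; its only subtlety is ruling out spurious solutions coming from the freedom in the local unitary.
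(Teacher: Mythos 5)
Your overall architecture coincides with the paper's: one direction is extracted from a triangle witness, the other is a neighbourhood rewriting around $u$. The LC~$\Rightarrow$~Euler direction as you sketch it is essentially the paper's Lemma \ref{localimplyhad} (the paper runs the triangle-to-path instance rather than your path-to-triangle one, and pushes the local unitaries through with Lemma \ref{hadslidegn} and the colour-change rules; since the diagrammatic form of the local complementation property already fixes the local unitaries as concrete $4\pi/3$ rotations, the ``spurious solutions'' worry you raise does not actually arise). The genuine gap is in the Euler~$\Rightarrow$~LC direction, and it sits exactly where you flag your ``main obstacle'': you assert that the bialgebra-type rules will ``collapse the complete-graph pattern on $N(u)$ through the green spider $u$'' for a general graph in one pass, but that collapse is the theorem, not a step. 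The paper supplies it in three layers that your proposal omits: (i) a nontrivial technical identity for the triangle (Lemma \ref{trianglebila}, proved via commutativity of the green copy, (B2), Lemma \ref{controlnotslide}, the colour-swapped Lemma \ref{quasibialgebra} and Lemma \ref{rotationchange}), from which the two triangle cases (Lemmas \ref{hadimplylocalcomple} and \ref{hadimplylocalcomple2}) follow; (ii) an induction on $n$ for the complete graph $K_n$ with all edges of type $H^{\dag}$, turning it into a star graph (Lemma \ref{hadimplylocalcomplestar}, using the generalized bialgebra rule); and (iii) a normalization reducing an arbitrary neighbourhood of $u$ to that complete-$H^{\dag}$-graph form, by using $C_{lm}^3=I$ (Lemma \ref{twocontrolz}) to rewrite a non-edge as an $H$-edge plus an $H^{\dag}$-edge and an $H$-edge as two $H^{\dag}$-edges, with the surplus edges commuting past the local complementation. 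You cite the right ingredient (multiplicities mod $3$) but never use it to produce the complete-graph pattern that your rewriting presupposes; in a general graph the edges inside $N(u)$ are arbitrary, so without step (iii) there is no such pattern to collapse, and without step (ii) there is no argument that the collapse realises the update $\Gamma'(v,w)=\Gamma(v,w)+\lambda\,\Gamma(v,u)\Gamma(u,w) \pmod 3$ simultaneously for all pairs. As written, the forward direction is a statement of intent rather than a proof.
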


 This equation will be called the Euler decomposition of $H$, as similar to the qubit case. We give some consequences of this decomposition  in the following.

\begin{lemma}\label{eulerconsequence}
The $H$-decomposition and the $H^{\dag}$-decomposition are not unique:
\begin{align*}
\beginpgfgraphicnamed{Qutrits//HadaDecom}
\InputIfFileExists{Qutrits//HadaDecom.tikz}{}{\input{./figures/Qutrits//HadaDecom.tikz}}
\endpgfgraphicnamed~~\Rightarrow ~~~~%
\beginpgfgraphicnamed{Qutrits//HadaDecom2}
\InputIfFileExists{Qutrits//HadaDecom2.tikz}{}{\input{./figures/Qutrits//HadaDecom2.tikz}}
\endpgfgraphicnamed ~, ~~~~%
\beginpgfgraphicnamed{Qutrits//hdagerdecom}
\InputIfFileExists{Qutrits//hdagerdecom.tikz}{}{\input{./figures/Qutrits//hdagerdecom.tikz}}
\endpgfgraphicnamed.
\end{align*}
\end{lemma}

\begin{proof}
\begin{align*}
\beginpgfgraphicnamed{Qutrits//eulerconsequenceprf}
\InputIfFileExists{Qutrits//eulerconsequenceprf.tikz}{}{\input{./figures/Qutrits//eulerconsequenceprf.tikz}}
\endpgfgraphicnamed
\end{align*}
\end{proof}

\begin{lemma}\label{rotationchange}
One color of $4\pi/3$ rotation can be expressed in terms of the other color. 
\begin{align*}
\beginpgfgraphicnamed{Qutrits//HadaDecom}
\InputIfFileExists{Qutrits//HadaDecom.tikz}{}{\input{./figures/Qutrits//HadaDecom.tikz}}
\endpgfgraphicnamed~~\Rightarrow ~~~~%
\beginpgfgraphicnamed{Qutrits//rotationchange}
\InputIfFileExists{Qutrits//rotationchange.tikz}{}{\input{./figures/Qutrits//rotationchange.tikz}}
\endpgfgraphicnamed
\end{align*}
\end{lemma}

\begin{proof}
\begin{align*}
\beginpgfgraphicnamed{Qutrits//rotationchangeprf}
\InputIfFileExists{Qutrits//rotationchangeprf.tikz}{}{\input{./figures/Qutrits//rotationchangeprf.tikz}}
\endpgfgraphicnamed
\end{align*}
Here we used Lemma \ref{eulerconsequence} and Lemma \ref{copyvars}.
\end{proof}

The rest of the paper is dedicated to the proof of  Theorem \ref{lceuler}: the equivalence
of local complementation property and the Euler decomposition of H. Note that below we only consider $1$-local complementation, since the case of $2$-local complementation is similar.

\subsection{Local Complementation Implies Euler Decomposition}

\begin{lemma}\label{localimplyhad}
Local complementation of a triangle implies the H-decomposition:
\begin{align*}
\beginpgfgraphicnamed{Qutrits//localcomtri}
\InputIfFileExists{Qutrits//localcomtri.tikz}{}{\input{./figures/Qutrits//localcomtri.tikz}}
\endpgfgraphicnamed \Rightarrow ~~~~ %
\beginpgfgraphicnamed{Qutrits//HadaDecom}
\InputIfFileExists{Qutrits//HadaDecom.tikz}{}{\input{./figures/Qutrits//HadaDecom.tikz}}
\endpgfgraphicnamed
\end{align*}
\end{lemma}

\begin{proof}
\begin{align*}
\beginpgfgraphicnamed{Qutrits//localcomtri}
\InputIfFileExists{Qutrits//localcomtri.tikz}{}{\input{./figures/Qutrits//localcomtri.tikz}}
\endpgfgraphicnamed \Rightarrow \end{align*}
\begin{align*}
\beginpgfgraphicnamed{Qutrits//localcomtriprf1}
\InputIfFileExists{Qutrits//localcomtriprf1.tikz}{}{\input{./figures/Qutrits//localcomtriprf1.tikz}}
\endpgfgraphicnamed
\end{align*}
Here we used lemma \ref{hadslidegn}.
 On the other hand,
 \begin{align*}
\beginpgfgraphicnamed{Qutrits//localcomtriprf2}
\InputIfFileExists{Qutrits//localcomtriprf2.tikz}{}{\input{./figures/Qutrits//localcomtriprf2.tikz}}
\endpgfgraphicnamed
\end{align*}
Therefore,
\begin{align*}
\beginpgfgraphicnamed{Qutrits//localcomtriprf3}
\InputIfFileExists{Qutrits//localcomtriprf3.tikz}{}{\input{./figures/Qutrits//localcomtriprf3.tikz}}
\endpgfgraphicnamed
\end{align*}
Applying the color change rules to the above equation, we have
\begin{align*}
\beginpgfgraphicnamed{Qutrits//localcomtriprf4}
\InputIfFileExists{Qutrits//localcomtriprf4.tikz}{}{\input{./figures/Qutrits//localcomtriprf4.tikz}}
\endpgfgraphicnamed
\end{align*}
Finally,
\begin{align*}
\beginpgfgraphicnamed{Qutrits//localcomtriprf5}
\InputIfFileExists{Qutrits//localcomtriprf5.tikz}{}{\input{./figures/Qutrits//localcomtriprf5.tikz}}
\endpgfgraphicnamed
\end{align*}
which is the desired decomposition.
\end{proof}

\subsection{ Euler Decomposition Implies Local Complementation}
We begin with the simplest non trivial examples of local complementation,
namely triangles (with one multiple edge). We need the following
\begin{lemma}\label{trianglebila}
\begin{align*}
\beginpgfgraphicnamed{Qutrits//HadaDecom}
\InputIfFileExists{Qutrits//HadaDecom.tikz}{}{\input{./figures/Qutrits//HadaDecom.tikz}}
\endpgfgraphicnamed  \Rightarrow ~~~~%
\beginpgfgraphicnamed{Qutrits//trianglebila}
\InputIfFileExists{Qutrits//trianglebila.tikz}{}{\input{./figures/Qutrits//trianglebila.tikz}}
\endpgfgraphicnamed
\end{align*}
\end{lemma}

The proof of Lemma \ref{trianglebila} is given in appendix.

\begin{lemma}\label{hadimplylocalcomple}
A local complementation on the top vertex of the triangle
removes the opposite edge.
\begin{align*}
\beginpgfgraphicnamed{Qutrits//HadaDecom}
\InputIfFileExists{Qutrits//HadaDecom.tikz}{}{\input{./figures/Qutrits//HadaDecom.tikz}}
\endpgfgraphicnamed  \Rightarrow ~~~~%
\beginpgfgraphicnamed{Qutrits//localcomtri}
\InputIfFileExists{Qutrits//localcomtri.tikz}{}{\input{./figures/Qutrits//localcomtri.tikz}}
\endpgfgraphicnamed
\end{align*}
\end{lemma}
The proof of Lemma \ref{hadimplylocalcomple} is given in appendix.

\begin{lemma}\label{hadimplylocalcomple2}
A local complementation on the top vertex of another form of the triangle
removes the opposite edge.
\begin{align*}
\beginpgfgraphicnamed{Qutrits//HadaDecom}
\InputIfFileExists{Qutrits//HadaDecom.tikz}{}{\input{./figures/Qutrits//HadaDecom.tikz}}
\endpgfgraphicnamed  \Rightarrow ~~~~%
\beginpgfgraphicnamed{Qutrits//localhhhd}
\InputIfFileExists{Qutrits//localhhhd.tikz}{}{\input{./figures/Qutrits//localhhhd.tikz}}
\endpgfgraphicnamed
\end{align*}
\end{lemma}

The proof of Lemma \ref{hadimplylocalcomple2} is given in appendix.

\begin{lemma}\label{hadimplylocalcomplestar} {\bf Complete graphs and Star graphs}
Let $K_n$ ($n>1$) be a complete graph with each pair of green nodes connected by $H^{\dag}$ nodes. Then
\begin{align*}
\beginpgfgraphicnamed{Qutrits//HadaDecom}
\InputIfFileExists{Qutrits//HadaDecom.tikz}{}{\input{./figures/Qutrits//HadaDecom.tikz}}
\endpgfgraphicnamed ~~~ \Rightarrow ~~~~%
\beginpgfgraphicnamed{Qutrits//stargraph}
\InputIfFileExists{Qutrits//stargraph.tikz}{}{\input{./figures/Qutrits//stargraph.tikz}}
\endpgfgraphicnamed
\end{align*}
\end{lemma}

The proof is quite similar to that of Lemma 8 in \cite{DuncanPerdrix}, we give it in the appendix.

{\bf General case} The general case can be derived from the previous case: we only need to consider the neighbors of the top vertex $u$ (non-neighbor part keep unchanged). According to Lemma \ref{twocontrolz}, if there is no edge connecting two green vertices in the  neighbors of $u$, then it can be rewritten as connected by an $H^{\dag}$-edge and an $H$-edge; if there is an $H$-edge connecting two green vertices, then it can be rewritten as connected by two $H^{\dag}$-edges. Therefore, the  neighbor part of  $u$ can be rewritten as a complete graph with each pair of green nodes connected by $H^{\dag}$ nodes and some more  $H^{\dag}$ or $H$ edges that can be pushed to the bottom of the graph. We can do similar things to the edges connecting $u$ and its neighbors. Thus a general graph can be rewritten as a Star graph with more $H^{\dag}$ or $H$ edges that can be applied after local complementation of the star graph.

For $\lambda=2$, we can prove in a similar way that euler decomposition and local complementation are equivalent.

\section{Conclusion and further work}

In this paper, we modify the rules of qutrit ZX-calculus by adding (S3) and (H2$^\prime$), expading (S1) and (P1), removing (P2) and reducing (K2) in comparing with that of \cite{BianWang2}. With these rules, we prove some useful graphical properties of qutrits such as the Hopf law, coincidence of all dualizers and commutativity of copy and co-copy. As a main result,  we represent qutrit graph states by diagrams of the qutrit ZX-calculus and prove that the qutrit version of local complementation
property is true if and only if the qutrit Hadamard gate $H$ has an Euler decomposition into $4\pi/3$-green and
red rotations.

Given this modified version of the qutrit ZX-calculus, it is natural to consider the completeness of qutrit ZX-calculus for qutrit stabilizer quantum mechanics.
It would also be interesting to apply the qutrit ZX-calculus to contextuality problem since quantum contextuality exists only in dimensions greater than two \cite{Gleason}.

\section*{Appendix}
\emph{Proof of Lemma \ref{trianglebila}.}\\
\begin{proof}
First we have
\begin{align*}
\beginpgfgraphicnamed{Qutrits//trianglebilaprf1}
\InputIfFileExists{Qutrits//trianglebilaprf1.tikz}{}{\input{./figures/Qutrits//trianglebilaprf1.tikz}}
\endpgfgraphicnamed
\end{align*}
Here we used commutativity of the green copy, (B2) and Lemma \ref{controlnotslide}.
Since
\begin{align*}
\beginpgfgraphicnamed{Qutrits//trianglebilaprf2}
\InputIfFileExists{Qutrits//trianglebilaprf2.tikz}{}{\input{./figures/Qutrits//trianglebilaprf2.tikz}}
\endpgfgraphicnamed
\end{align*}
we have
\begin{align*}
\beginpgfgraphicnamed{Qutrits//trianglebilaprf3}
\InputIfFileExists{Qutrits//trianglebilaprf3.tikz}{}{\input{./figures/Qutrits//trianglebilaprf3.tikz}}
\endpgfgraphicnamed
\end{align*}
Here we used the color-swapped version of lemma \ref{quasibialgebra} for the last equality.
Therefore,
\begin{align*}
\beginpgfgraphicnamed{Qutrits//trianglebilaprf4}
\InputIfFileExists{Qutrits//trianglebilaprf4.tikz}{}{\input{./figures/Qutrits//trianglebilaprf4.tikz}}
\endpgfgraphicnamed
\end{align*}
Here we used lemma \ref{rotationchange} for the last equality.
\end{proof}

\emph{Proof of Lemma \ref{hadimplylocalcomple}.}\\
\begin{proof}
\begin{align*}
\beginpgfgraphicnamed{Qutrits//hadimplylocalcompleprf}
\InputIfFileExists{Qutrits//hadimplylocalcompleprf.tikz}{}{\input{./figures/Qutrits//hadimplylocalcompleprf.tikz}}
\endpgfgraphicnamed
\end{align*}
Here we used Euler decomposition and lemma \ref{trianglebila}.
\end{proof}

\emph{Proof of Lemma \ref{hadimplylocalcomple2}.}\\
\begin{proof}
By Lemma \ref{hadimplylocalcomple} we have
\begin{align*}
\beginpgfgraphicnamed{Qutrits//localhhhdprf}
\InputIfFileExists{Qutrits//localhhhdprf.tikz}{}{\input{./figures/Qutrits//localhhhdprf.tikz}}
\endpgfgraphicnamed
\end{align*}

\end{proof}

\emph{Proof of Lemma \ref{hadimplylocalcomplestar}.}\\
\begin{proof}
The proof is by induction on $n$.
\begin{align*}
\beginpgfgraphicnamed{Qutrits//stargraphprf}
\InputIfFileExists{Qutrits//stargraphprf.tikz}{}{\input{./figures/Qutrits//stargraphprf.tikz}}
\endpgfgraphicnamed
\end{align*}
Here we used generalized bialgebra rules for qutrits, which can be derived similarly to the qubit case as shown in Lemma 3 of \cite{DuncanPerdrix}.
\end{proof}


\begin{thebibliography}{99}


\bibitem{CoeckeDuncan} B. Coecke,  R. Duncan (2011): Interacting quantum
observables: Categorical algebra and diagrammatics.
New Journal of Physics 13, p.
043016.
\bibitem{Vladimir} C. Sch\"order de Witt, V. Zamdzhiev. The ZX-calculus is incomplete for quantum mechanics. EPTCS 172, pp.285-292, 2014.

\bibitem{Miriam1} Miriam Backens (2014): The ZX-calculus is complete for stabilizer quantum mechanics. In: New Journal of Physics. Vol. 16. No. 9. Pages 093021.

\bibitem{Cleve} R. Cleve, D.Gottesman, and Hoi-kwong Lo, How to share a quantum secret, Phys.Rev.Lett. 83, 1999.

\bibitem{Bourennane} M. Bourennane, A. Karlsson, and G. Bjoerk, Quantum key distribution
using multilevel encoding, Phys. Rev. A, vol. A, pp. 012306/1-5,
July 2001.

\bibitem{Bouchet}A. Bouchet, Circle Graph Obstructions Journal of Combinatorial Theory, Series B, Vol 60, 1,pp107-144 (1994).

\bibitem{Kant}M.M. Kant\'e, M. Rao, The Rank-Width of Edge-Coloured Graphs. Theory of Computing Systems,
1-46, (2012).

\bibitem{Griffiths} Shiang Yong Looi, Li Yu, Vlad Gheorghiu, and Robert B. Griffiths, Quantum-error-correcting codes using qudit graph states, Phys. Rev. A 78, 042303, Published 7 October 2008


\bibitem{DuncanPerdrix}   Ross Duncan and Simon Perdrix (2009): Graph States and the Necessity of Euler Decomposition. In:
Mathematical Theory and Computational Practice, 5635, Springer Berlin Heidelberg, pp. 16-177.


\bibitem{BianWang1} Xiaoning Bian and Quanlong Wang, Graphical Calculus for Qutrit Systems, IEICE Transactions on Fundamentals of Electronics, Communications and Computer Sciences,Vol. E98.A (2015) No. 1 pp. 391-399

\bibitem{BianWang2}Quanlong Wang and Xiaoning Bian, Qutrit Dichromatic Calculus and Its Universality, EPTCS 172, 2014, pp.92-101

\bibitem{mmp} Anne Marin, Damian Markham and Simon Perdrix, Access Structure in Graphs in High Dimension and Application to Secret Sharing, 8th Conference on the Theory of Quantum Computation, Communication and Cryptography (TQC 2013), pp.308-324

\bibitem{bpw} M. Backens, S. Perdrix, Q. Wang. A Simplified Stabilizer ZX-calculus, arXiv:1602.04744
\bibitem{Ranchin} Andre Ranchin, Depicting qudit quantum mechanics and mutually unbiased qudit
theories. Electronic Proceedings in Theoretical Computer Science, 172:68-91, 2014
\bibitem{Keet} A. Keet, B. Fortescue, D. Markham and B. C. Sanders, Quantum secret sharing with qudit graph
states. Phys. Rev. A 82, 062315 (2010).

\bibitem{Gleason} Gleason, A. M, Measures on the closed subspaces of a Hilbert space, Journal of Mathematics and Mechanics 6, 885¨C893 (1957).
\end{thebibliography}
\end{document}